\DocumentMetadata{
  lang          = en,
  pdfstandard   = {ua-2,a-4f},
  tagging       = on,
  tagging-setup = {math/setup=mathml-SE} 
}
\documentclass[10pt,twocolumn,a4paper]{article}
\usepackage[margin=1.5cm,columnsep=1cm]{geometry}
\usepackage[T1]{fontenc}

\usepackage{amsmath,amsthm}
\usepackage{cite,microtype,hyperref}
\usepackage[cochineal]{newtx}
\usepackage{titlesec}
\titleformat{\section}{\large}{\thesection}{1em}{}
\newtheorem{theorem}{Theorem}
\newtheorem{corollary}{Corollary}

\title{Sudoku Grids That Require Many Clues}

\author{David Eppstein\footnote{Research supported in part by NSF grant CCF-2212129.} \qquad\qquad Xinyu (Cindy) Zhang\\
\normalsize Computer Science Department, University of California, Irvine}
\date{ }

\begin{document}

\maketitle
\thispagestyle{empty}

\begin{abstract}
Motivated by worst-case algorithmic time bounds for solving sudoku, we prove that a majority of filled-in $n^2\times n^2$ sudoku grids require all but a logarithmic fraction of cells to be filled by clues. For $9\times 9$ and $16\times 16$ sudoku, we construct grids that require $18$ clues and $80$ clues.
\end{abstract}

\section{Introduction}

In sudoku puzzles, a $9\times 9$ grid of square cells, partitioned into nine $3\times 3$ blocks,
is to be filled with the digits from $1$ to $9$ so each row, column, and block contains each digit. To make the solution unique,
some cells are initially filled as clues. Often $30\%$--$35\%$ of cells are clues, but many fewer can be needed: the minimum number of clues in a uniquely-solvable puzzle is $17$. Sudoku can be generalized to $n^2\times n^2$ grids partitioned into $n^2$ blocks of size $n\times n$, and $16\times 16$ sudoku is conjectured to require $56$ clues~\cite{mcguire:2013}. For $n^2\times n^2$ sudoku the minimum number of clues is conjecturally $\Omega(n^4)$; only $\Omega(n^2)$ has been proven, but special clue sets that allow a solution using very restricted deduction rules require $\Omega(n^4)$ clues. Further, some sudoku puzzles can be uniquely specified using $\lfloor n^4/4\rfloor$ clues~\cite{mahmah16}.

We study a variant of this minimum-clue problem. Instead of the fewest clues among all sudoku puzzles, we seek the fewest clues for a given filled-in grid, and a grid that maximizes this minimum number of clues.  We prove that most $n^2\times n^2$ filled grids require $n^4-O(n^4/\log n)$ clues, leaving only a logarithmic fraction of cells unfilled. We construct many $9\times 9$ sudoku grids requiring $18$ clues, matching a report of McGuire et al.~\cite{mcguire:2013} that $17$ clues do not always suffice, and a $16\times 16$ grid requiring $80$ clues, well above the conjectured $56$-clue minimum. Our results are motivated by the worst-case algorithmic complexity of sudoku, and by a result that an $n^2\times n^2$ sudoku puzzle with $m$ clues can be solved in time $O(n^4\,2^{n^4-m})$. For this solution method, grids with more clues can be solved more quickly.

A related problem, adding the fewest clues to a partial set of clues to produce a unique solution, is hard for the second level of the polynomial hierarchy~\cite{DemMaSch-TCS-18}. Our problem differs in seeking the fewest clues for the solution itself. It would be of interest to find its computational complexity; we do not.

\section{Algorithmic motivation}

Published sudoku puzzles often can be solved by deduction rules of polynomial complexity~\cite{eppstein:2005}, but sudoku is $\mathsf{NP}$-hard and hence unlikely to have polynomial worst-case time~\cite{yato:2003}. Backtracking often works well but this has not been validated theoretically. To close the gap between theory and practice, we seek worst-case times that are exponential but as fast as possible. A brute force search that checks all ways of filling empty cells with digits would take time $O(n^4 n^{2(n^2-m)})$. Instead, we adapt an algorithm by Björklund, Husfeldt, and Koivisto that uses dynamic programming and inclusion--exclusion on the lattice of subsets of a given set to find exact covers~\cite{bjorklund:2009}.

Define a \emph{valid placement} of a digit to be a set of cells that, together with the clues for that digit, include one cell in each row, column, and block of the puzzle. A sudoku solution partitions all cells into valid placements, one per digit. This is not quite an exact cover because each digit must contribute exactly one valid placement to the cover. Instead we use the \emph{sum weighted partitions} problem defined by Björklund et al.~\cite{bjorklund:2009}. It takes as input a family of $k$ functions $f_i$ from subsets of a given set to numbers; the output is the sum of products $\sum_P\prod_i f_i(P_i)$ where the sum is over partitions $P$ into $k$ sets $P_i$. For sudoku, let $k=n^2$ and let $f_i(S)$ be one when $S$ is a valid placement of digit $i$ and zero otherwise; then the sum weighted partitions value is the number of puzzle solutions. (Conventionally this should equal one but we do not assume this.) By evaluating this number of solutions, we can determine whether any digit placement leads to a solution, and by iterating, solve the puzzle.

\begin{theorem}
We can solve sudoku in time $n^{O(1)}\,2^{n^4-m}$.
\end{theorem}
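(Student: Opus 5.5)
We apply the sum weighted partitions algorithm of Björklund, Husfeldt, and Koivisto~\cite{bjorklund:2009} with the functions $f_i$ defined above, taking as ground set $U$ only the $N=n^4-m$ empty cells rather than all cells. The clues are fixed, so a solution is exactly a partition of $U$ into $k=n^2$ sets $P_i$, where $P_i$ is a valid placement of digit $i$. Hence $\sum_P\prod_i f_i(P_i)$ counts solutions. Their algorithm evaluates this sum with $N^{O(1)}2^N$ ring operations, given the values of each $f_i$ on all $2^N$ subsets.

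\textbf{Step 1: tabulate the $f_i$.} For each digit $i$ we compute a bit vector over subsets $S\subseteq U$. Whether $S$ is a valid placement can be tested in $n^{O(1)}$ time: we check that $S$ together with the clue cells for $i$ contains exactly one cell in each row, column, and block. This tabulation costs $n^2\cdot 2^N\cdot n^{O(1)}$, which fits the bound. We could instead use only subsets of size $n^2-c_i$, where $c_i$ is the number of clues of digit $i$, but this is not needed.

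\textbf{Step 2: evaluate the count.} We apply inclusion--exclusion through zeta and Möbius transforms over the subset lattice. For each $X\subseteq U$ and each $i$, we compute the ranked zeta transform $\hat f_i(X,s)=\sum_{S\subseteq X,|S|=s}f_i(S)$. For each $X$ we then combine these by polynomial multiplication in the rank variable, keep the coefficient of total rank $|U|$, and sum with signs $(-1)^{|U\setminus X|}$. This costs $N^{O(1)}2^N$ operations in total. The numbers involved are at most $(n^2)!^{\,n^2}$-ish counts times $2^N$, so they have $n^{O(1)}$ bits. Each arithmetic operation therefore costs $n^{O(1)}$ time, and since $N\le n^4$ the total is $n^{O(1)}2^{n^4-m}$. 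The main subtlety is exactly this bookkeeping: making sure the ground set is the empty cells, so the exponent is $n^4-m$ and not $n^4$, and that the bit lengths stay polynomial.

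\textbf{Step 3: recover a solution.} Having a counting oracle, we decide emptiness and then fill cells one at a time. For the current empty cell we try each digit $d$, add it as a clue, and recount, keeping any $d$ that gives a nonzero count. Each added clue decreases $N$, so the costs form a geometric series dominated by the first call. Alternatively, the at most $n^6$ calls each cost at most $n^{O(1)}2^{n^4-m}$. Either way, the total time is $n^{O(1)}2^{n^4-m}$.
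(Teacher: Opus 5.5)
Your proposal is correct and follows the paper's approach. It applies the Bj\"orklund--Husfeldt--Koivisto sum-weighted-partitions algorithm to the $n^2$ indicator functions of valid placements, using the $n^4-m$ empty cells as the ground set, and recovers a solution by the same self-reduction. The only real difference is that you tabulate each $f_i$ by testing all $2^{n^4-m}$ subsets directly, while the paper lists valid placements via a reduction to path enumeration. This costs you nothing extra, since the paper must zero-initialize tables of the same size anyway, and your explicit ranked inclusion--exclusion and bit-length bounds fill in details the paper leaves to the citation.
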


\begin{proof}
We construct the functions $f_i$ defined above by setting all subset values to zero, listing valid placements using a reduction to enumerating paths in graphs~\cite{eppstein:2012} and setting each placement's value to one. We then apply the  sum weighted partitions algorithm of Björklund et al., which for $n^2$ 0--1 valued functions on subsets of $n^4-m$ cells takes the stated time.
\end{proof}

\section{Existence}

We prove the existence of filled grids with many clues by comparing the number of filled grids to the number of ways of specifying clues. If $S(n)$ denotes the number of filled $n^2\times n^2$ sudoku grids, then only $S(2)=288$ and \[S(3)=6\,670\,903\,752\,021\,072\,936\,960\] are known exactly~\cite{oeis}. Asymptotically, Keevash~\cite{keevash:2019} proves that \[S(n)=\left(\frac{n^2}{e^3}+o(n^2)\right)^{n^4}.\]
If $C(n,m)$ denotes the number of ways of specifying $m$ clues in an $n^2\times n^2$ sudoku grid, then
\[C(n,m)\le\binom{n^4}{m}n^{2m},\]
for there are $\tbinom{n^4}{m}$ choices of clue placements and $(n^2)^m=n^{2m}$ ways of filling them with digits.
At most $C(n,m)$ filled grids can be determined by these $m$ clues.
It does not increase the count of grids to include puzzles with fewer clues, because such a puzzle could be extended to exactly $m$ clues by adding unnecessary clues.
Comparing $S(n)$ and $C(n,m)$ gives our result:

\begin{theorem}
All but a $1/2^{n^4}$ fraction of the filled $n^2\times n^2$ sudoku grids require $m\ge n^4-O(n^4/\log n)$ clues.
\end{theorem}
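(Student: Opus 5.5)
We compare the count of filled grids with the count of clue sets. Each filled grid that is uniquely determined by at most $m$ clues is, after padding with unnecessary clues, determined by some set of exactly $m$ clues. Each such clue set determines at most one grid. So the number of grids needing at most $m$ clues is at most $C(n,m)\le\binom{n^4}{m}n^{2m}$. It therefore suffices to show that for $m=n^4-cn^4/\log n$, with a suitable constant $c$, we have $C(n,m)\le S(n)/2^{n^4}$. This shows that all but a $2^{-n^4}$ fraction of grids need more than $m$ clues.

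Write $N=n^4$ and $k=N-m=cN/\log n$ for the number of non-clue cells, so that $\binom{N}{m}=\binom{N}{k}$. The bound $\binom{N}{k}\le 2^N$ is enough here, so
\[
C(n,m)\le 2^N\,n^{2m}=2^N\,n^{2N}\,n^{-2k}.
\]
On the other side, Keevash's estimate gives
\[
S(n)=\bigl(n^2(e^{-3}+o(1))\bigr)^{N}\ge n^{2N}\,e^{-4N}
\]
for large $n$. The required inequality $C(n,m)\,2^N\le S(n)$ therefore reduces to
\[
4^N\,n^{-2k}\le e^{-4N},
\qquad\text{that is,}\qquad
2k\ln n\ge N(4+2\ln 2).
\]
Taking $k=cN/\ln n$ with $c=2+\ln 2$, or any larger constant, satisfies this. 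Since $\ln n$ and $\log n$ differ by a constant factor, this gives $m=N-O(N/\log n)$.

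For the finitely many small $n$ where Keevash's $o(1)$ term is not yet below the needed slack, the $O(\cdot)$ absorbs them: we may enlarge the constant so that $m\le 0$ there, making the statement vacuous. Alternatively we can state the result for sufficiently large $n$.

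The main obstacle is only bookkeeping. Keevash's asymptotic must be converted into an explicit lower bound such as $S(n)\ge (n^2/e^4)^{N}$. We must also check that the crude bound $\binom{N}{k}\le 2^N$ does not destroy the logarithmic gain. It does not, because the savings factor $n^{-2k}$ is $e^{-2ck\ln n/k}$ per missing clue, which is $e^{-\Theta(N)}$ overall when $k=\Theta(N/\log n)$. That factor beats all the constant-base exponential losses: $2^N$ from the binomial, $e^{3N}$ from Keevash, and $2^N$ from the desired fraction.
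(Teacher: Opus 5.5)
Your proposal is correct and follows the paper's proof. Both arguments bound the number of grids fixed by $m$ clues by $C(n,m)\le\binom{n^4}{m}n^{2m}$, use $\binom{n^4}{m}\le 2^{n^4}$, and compare against Keevash's estimate for $S(n)$ to solve for $m$. The differences are cosmetic: you fix an explicit constant (via $S(n)\ge (n^2/e^4)^{n^4}$ for large $n$, giving $c=2+\ln 2$) and handle small $n$ explicitly, whereas the paper absorbs these into an $o(n^4)$ term.
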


\begin{proof}
We need $C(n,m)\ge S(n)/2^{n^4}$ to represent more than a $1/2^{n^4}$ fraction of the filled grids.
Taking logarithms (base $2$) of both sides, and using the fact that $\tbinom{n^4}{m}\le 2^{n^4}$, gives
\[n^4+2m\log_2 n \ge 2n^4\log_2 n-(3\log_2 e+1)n^4+o(n^4).\]
Subtracting $n^4$ from both sides, dividing by $2\log_2 n$, and consolidating the lower-order terms gives the stated bound.
\end{proof}

As each filled grid has at least one minimal clue set, the same counting argument shows that the same fraction of minimal clue sets require the same number of clues.
This leads to a better exponent in the exponential-time algorithms for these puzzles:

\begin{corollary}
For all but a $1/2^{n^4}$ fraction of filled $n^2\times n^2$ sudoku grids or minimal sudoku clue sets, solving a puzzle with this solution or clue set takes time $2^{O(n^4/\log n)}$. The same time bound holds in the average case over random filled grids or minimal clue sets.
\end{corollary}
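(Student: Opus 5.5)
The plan is to combine Theorem~1 with Theorem~2 and the remark about minimal clue sets that follows it. Theorem~1 solves any puzzle with $m$ clues in time $n^{O(1)}2^{n^4-m}$. Theorem~2 shows that for all but a $1/2^{n^4}$ fraction of filled grids, every clue set determining that grid has $m\ge n^4-O(n^4/\log n)$. Substituting, we get $n^4-m=O(n^4/\log n)$, so the running time is $n^{O(1)}2^{O(n^4/\log n)}=2^{O(n^4/\log n)}$; the polynomial factor is absorbed because $O(\log n)=o(n^4/\log n)$.

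For minimal clue sets, I would redo the counting of Theorem~2 with clue sets in place of grids. Each clue set with $m$ clues is one of at most $C(n,m)$ objects, while the minimal clue sets number at least $S(n)$, since each grid has at least one and distinct grids have distinct clue sets. So at most a $C(n,m)/S(n)$ fraction of them can have $m$ or fewer clues. The calculation in the proof of Theorem~2 then shows that this fraction is below $1/2^{n^4}$ unless $m\ge n^4-O(n^4/\log n)$. Once the clue count is bounded, Theorem~1 gives the time bound as before.

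For the average case, I would split the expectation of the running time over a random grid or minimal clue set into two parts:
\begin{itemize}
\item the good instances, each costing at most $2^{O(n^4/\log n)}$;
\item the bad instances, which have probability at most $2^{-n^4}$ and cost at most the worst case $n^{O(1)}2^{n^4}$ (taking $m\ge 0$ in Theorem~1).
\end{itemize}
The second contribution is at most $n^{O(1)}$, so the expectation is $2^{O(n^4/\log n)}$.

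The main obstacle is this averaging step. A crude worst-case bound such as brute force, $n^{O(n^4)}$, would swamp the $2^{-n^4}$ probability, so the bad instances must be charged using Theorem~1's $2^{n^4}$ bound, which cancels exactly. If the $o(n^4)$ terms in Theorem~2 spoil the exact cancellation, I would strengthen the exceptional fraction slightly, to $2^{-2n^4}$, which costs only a constant in the $O(n^4/\log n)$. One subtlety is that for a random grid the algorithm's input is some clue set for it. The argument uses only that every clue set for a good grid is large, which is what Theorem~2 gives, since it bounds all clue sets determining the grid.
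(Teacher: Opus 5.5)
Your proposal is correct and takes the paper's route. You plug Theorem~2's clue bound into Theorem~1's $n^{O(1)}2^{n^4-m}$ running time, and you rerun the counting for minimal clue sets, of which there are at least $S(n)$ because distinct grids have distinct clue sets. Your explicit averaging step, which charges the $2^{-n^4}$ exceptional fraction at Theorem~1's $n^{O(1)}2^{n^4}$ worst case, supplies a detail the paper leaves implicit.

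One small imprecision: for minimal clue sets the padding argument no longer gives an injection. So bound the minimal clue sets with at most $m$ clues by $\sum_{j\le m}C(n,j)\le 2^{n^4}n^{2m}$ rather than by $C(n,m)$. This is the same estimate the proof of Theorem~2 already uses.
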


\section{Construction}

A Latin square is like a sudoku grid in having distinct digits in each row or column of its grid of cells, but its side length need not be square and its cells are not grouped into blocks. The following steps pack $n^2$ Latin squares of size $n\times n$ into an $n^2\times n^2$ sudoku grid:
\begin{itemize}\itemsep0pt
\item Partition the $n^2$ digits into $n$ subsets $D_i$ of $n$ digits. In the filled sudoku grid, each row of each block will use one of these subsets.
\item For each of $n$ horizontal groups of $n\times n$ sudoku blocks, use an $n\times n$ Latin square to determine which row of each block uses which digit subset $D_i$.
\item For each of $n$ vertical groups of $n\times n$ sudoku blocks, and each subset $D_i$ of digits, place $D_i$ into its assigned rows (an $n\times n$ subset of cells) as an $n\times n$ Latin square.
\end{itemize}
Let $L(n)$ denote the number of $n\times n$ Latin squares.
There are $n^2!/n!^{n+1}$ partitions of digits into subsets $D_i$ (treating any permutation of subsets as equivalent), $L(n)^n$ ways of choosing Latin squares for horizontal groups of blocks, and $L(n)^{n^2}$ ways of choosing Latin squares for digit subsets in each vertical group of blocks. Thus, the total number of choices is \[\frac{n^2!}{n!^{n+1}}L(n)^{n^2+n}.\]

$L(3)=12$, so this formula gives $2\,496\,508\,125\,511\,680$
distinct $9\times 9$ sudoku grids, each packed with nine $3\times 3$ Latin squares. Each Latin square requires two clues, because a single clue would allow swapping the other two digits in that square. Thus, each of these sudoku grids requires at least $18$ clues.

For $n=4$, case analysis shows that the Latin square
\[
\begin{matrix}
1 & 2 & 3 & 4 \\
2 & 1 & 4 & 3 \\
3 & 4 & 1 & 2 \\
4 & 3 & 2 & 1 \\
\end{matrix}
\]
requires five clues. Using our construction to pack $16$ copies of this Latin square into a $16\times 16$ sudoku grid produces a filled grid requiring $80$ clues.

Hatami and Qian~\cite{Hatami:2018} prove that all $n\times n$ Latin squares require $\Omega(n^2)$ clues, so in general the sudoku grids from our construction require $\Omega(n^4)$ clues.

\linespread{1}\normalfont
\bibliographystyle{abbrvurl}
\bibliography{references}

\end{document}